\lstdefinelanguage{pseudo}{
	morekeywords={if, then, else, while, foreach, do, assume, assert, let, return},
	sensitive=false,
	morecomment=[l]{//},
	morecomment=[s]{/*}{*/},
	morestring=[b]",
}
\newcommand{\loopt}{{\scriptstyle\mathrm{LOOP}}} 
\newcommand{\stemt}{{\scriptstyle\mathrm{STEM}}} 
\def\diag{\mathrm{diag}} 
\newcommand{\abovebelow}[2]{
\left(\begin{smallmatrix}
{#1} \\ {#2}
\end{smallmatrix}\right)
} 
\title{Geometric Nontermination Arguments}
\titlerunning{Geometric Nontermination Arguments}
\author{Jan Leike\inst{1} \and Matthias Heizmann\inst{2}}
\authorrunning{J.\ Leike \and M.\ Heizmann}
\institute{
  Australian National University \and
  University of Freiburg
}
\begin{document}

\maketitle

\begin{abstract}
We present a new kind of nontermination argument,
called \emph{geometric nontermination argument}.
The geometric nontermination argument is a finite representation
of an infinite execution that has the form of
a sum of several geometric series.
For so-called linear lasso programs we can decide the existence of
a geometric nontermination argument using a nonlinear algebraic $\exists$-constraint.
We show that a deterministic conjunctive loop program
with nonnegative eigenvalues is nonterminating
if an only if there exists a geometric nontermination argument.
Furthermore,
we present an evaluation that demonstrates that our method is feasible in practice.
\end{abstract}


\section{Introduction}
\label{sec:introduction}

The problem whether a program is terminating is undecidable in general.
One way to approach this problem in practice is to analyze the existence of
termination arguments and nontermination arguments.
The existence of a certain termination argument like,
e.g, a linear ranking function, is decidable~\cite{PR04,BAG13} and implies termination.
However, if we cannot find a linear ranking function we cannot conclude nontermination.
Vice versa, the existence of a certain nontermination argument like,
e.g, a linear recurrence set~\cite{GHMRX08}, is decidable and implies nontermination
however, if we cannot find such a recurrence set we cannot conclude termination.

In this paper we present a new kind of termination argument which we call
\emph{geometric nontermination argument (GNTA)}.
Unlike a recurrence set, a geometric nontermination argument
does not only imply nontermination,
it also explicitly represents an infinite program execution.
An infinite program execution that is represented by a geometric nontermination argument
can be written as a pointwise sum of several geometric series.
We show that such an infinite execution exists for each
deterministic conjunctive loop program that is nonterminating
and whose transition matrix has only nonnegative eigenvalues.

We restrict ourselves to linear lasso programs.
A lasso program consists of a single while loop that is preceded by straight-line code.
The name refers to the lasso shaped form of the control flow graph.
Usually, linear lasso programs do not occur as stand-alone programs.
Instead, they are used as a finite representation of
an infinite path in a control flow graph.
For example, in (potentially spurious) counterexamples in termination analysis~\cite{CPR06,BCF13,HLNR10,KSTTW08,KSTW10,PR04TI,PodelskiR05,HHP14},
stability analysis~\cite{CFKP11,PW07},
cost analysis~\cite{AAGP11,GZ10},
or the verification of temporal properties~\cite{CookKV11,CookKP15,DietschHLP15} for programs.

We present a constraint based approach that allow us to check whether a
linear conjunctive lasso program has a geometric nontermination argument
and to synthesize one if it exists.

Our analysis is motived by the
probably simplest form of an infinite executions,
namely infinite execution where the same state is always repeated.
We call such a state a fixed point.
For lasso programs we can reduce the check for the existence of a fixed point
to a constraint solving problem as follows.
Let us assume that the stem and the loop of the lasso program are given as
a formulas over primed and unprimed variables $\stemt(\vec x, \vec x')$
and $\loopt(\vec x, \vec x')$.
The infinite sequence
$\vec s_0, \bar{\vec s}, \bar{\vec s} , \bar{\vec s},\ldots$
is an nonterminating execution of the lasso program iff the assignment
$\vec x_0\mapsto \vec s_0, \bar{\vec x}\mapsto \bar{\vec s}$
is a satisfying assignment for the constraint
$\stemt(\vec x_0, \bar{\vec x})\land \loopt(\bar{\vec x}, \bar{\vec x})$.
In this paper, we present a constraint that is not only satisfiable if the program has
a fixed point, it is also satisfiable if the program has a nonterminating
execution that can be written as a pointwise sum of geometric series.

\begin{figure}[t]
\begin{center}
\begin{subfigure}[b]{0.3\textwidth}
\begin{lstlisting}
b := 1;
while (a+b >= 3):
  a := 3*a + 1;
  b := nondet();
\end{lstlisting}
\caption{ }
\label{fig:IntroExA}
\end{subfigure}
\begin{subfigure}[b]{0.3\textwidth}
\begin{minipage}{35mm}
\begin{lstlisting}
b := 1;
while (a+b >= 3):
  a := 3*a - 2;
  b := 2*b;
\end{lstlisting}
\end{minipage}\hspace{5mm}
\caption{ }
\label{fig:IntroExB}
\end{subfigure}
\begin{subfigure}[b]{0.3\textwidth}
\begin{minipage}{35mm}
\begin{lstlisting}
b := 1;
while (a+b >= 4):
  a := 3*a + b;
  b := 2*b;
\end{lstlisting}
\end{minipage}
\caption{ }
\label{fig:IntroExC}
\end{subfigure}
\end{center}
\caption{
Three nonterminating linear lasso programs.
Each has an infinite execution which is either a geometric series or
a pointwise sum of geometric series.
The first lasso program is nondeterministic because the variable \texttt{b}
gets some nondeterministic value in each iteration.
}
\label{fig:introexample}
\end{figure}

Let us motivate the representation of infinite executions
as sums of geometric series in three steps.
The program depicted in \autoref{fig:IntroExA}
shows a lasso program which does not have a fixed point but the following
infinite execution.

\[
\abovebelow{2}{0},
\abovebelow{2}{1},
\abovebelow{7}{1},
\abovebelow{22}{1},
\abovebelow{67}{1},
\dots
\]
We can write this infinite execution as a a geometric series
where for $k>1$ the $k$-th state is the sum
$\vec{x_1} + \sum_{i=0}^k \lambda^i \vec{y}$,
where we have
$\vec{x_1} = \abovebelow{2}{1}$,
$\vec y = \abovebelow{5}{0}$, and
$\lambda = 2$.
The state $\vec{x_1}$ is the state before the loop was executed before
the first time and intuitively $\vec y$ is the direction in which the
execution is moving initially and $\lambda$ is the speed at which the
execution continues to move in this direction.

Next, let us consider the lasso program depicted in \autoref{fig:IntroExB}
which has the following infinite execution.
\[
\abovebelow{2}{0},
\abovebelow{2}{1},
\abovebelow{4}{4},
\abovebelow{10}{8},
\abovebelow{28}{16},
\dots
\]
We cannot write this execution as a geometric series as we did above.
Intuitively, the reason is that the values of both variables are
increasing at different speeds and hence this execution is not
moving in a single direction.
However, we can write this infinite execution as a sum of geometric series
where for $k>1$ the $k$-th state can be written as a sum
$\vec{x_1} + \sum_{i=0}^{k} Y
\big(
\begin{smallmatrix}\lambda_1&0\\0&\lambda_2\end{smallmatrix}
\big)^i \vec1$,
where we have
$\vec{x_1} = \abovebelow{2}{1}$,
$\vec Y = \left(
\begin{matrix}
2 & 0 \\
0 & 1
\end{matrix}
\right)$,
$\lambda_1=3,\lambda_2=2$
and $\vec1$ denotes the column vector of ones.
Intuitively, our execution is moving in two different directions at different speeds.
The directions are reflected by the column vectors of $Y$,
the values of $\lambda_1$ and $\lambda_2$ reflect the respective speeds.

Let us next consider the lasso program in \autoref{fig:IntroExC}
which has the following infinite execution.
\[
\abovebelow{3}{0},
\abovebelow{3}{1},
\abovebelow{10}{2},
\abovebelow{32}{4},
\abovebelow{100}{8},
\dots
\]
We cannot write this execution as a pointwise sum of geometric series in the
form that we used above.
Intuitively, the problem is that one of the initial directions contributes at
two different speeds to the overall progress of the execution.
However, we can write this infinite execution as a pointwise sum of geometric series
where for $k>1$ the $k$-th state can be written as a sum
$\vec{x_1} + \sum_{i=0}^{k} Y
\big(
\begin{smallmatrix}\lambda_1&\mu\\0&\lambda_2\end{smallmatrix}
\big)^i \vec1$,
where we have
$\vec{x_1} = \abovebelow{3}{1}$,
$\vec Y = \left(
\begin{matrix}
4 & 3 \\
0 & 1
\end{matrix}
\right)$,
$\lambda_1=3,\lambda_2=2,\mu=1$
and $\vec1$ denotes the column vector of ones.
We call the tuple $(\vec{x_0}, \vec{x_1}, Y, \lambda_1, \lambda_2, \mu)$
which we use as a finite representation for the infinite execution a
\emph{geometric nontermination argument}.

In this paper, we formally introduce the notion of
a geometric nontermination argument for linear lasso programs
(\autoref{sec:gnta}) and we prove that each
nonterminating deterministic conjunctive linear loop program
whose transition matrix has only nonnegative real eigenvalues
has a  geometric nontermination argument, i.e.,
each such nonterminating linear loop program has an infinite execution
which can be written as a sum of geometric series (\autoref{sec:completeness}).


\section{Preliminaries}
\label{sec:preliminaries}

We denote vectors $\vec x$ with bold symbols and
matrices with uppercase Latin letters.
Vectors are always understood to be column vectors,
$\vec1$ denotes a vector of ones,
$\vec0$ denotes a vector of zeros (of the appropriate dimension), and
$\vec{e_i}$ denotes the $i$-th unit vector.
A list of notation can be found on page~\pageref*{app:notation}.

\subsection{Linear Lasso Programs}
\label{ssec:linear-lassos}

In this work, we consider linear lasso programs,
programs that consist of a program step and a single loop.
We use binary relations over the program's states to
define the stem and the loop transition relation.
Variables are assumed to be real-valued.

We denote by $\vec x$
the vector of $n$ variables $(x_1, \ldots, x_n)^T \in \mathbb{R}^n$
corresponding to program states, and
by $\vec{x'} = (x_1', \ldots, x_n')^T \in \mathbb{R}^n$
the variables of the next state.

\begin{definition}[Linear Lasso Program]\label{def:linear-lassos}
A (conjunctive) \emph{linear lasso program} $L = (\stemt, \loopt)$
consists of two binary relations defined by
formulas with the free variables $\vec x$ and $\vec{x'}$
of the form
\[
A \abovebelow{\vec x}{\vec{x'}} \leq \vec{b}
\]
for some matrix $A \in \mathbb{R}^{n \times m}$
and some vector $\vec{b} \in \mathbb{R}^m$.
\end{definition}

A \emph{linear loop program} is
a linear lasso program $L$ without stem, i.e.,
a linear lasso program such that the relation $\stemt$ is equivalent to $true$.

\begin{definition}[Deterministic Linear Lasso Program]\label{def:det-linear-lassos}
A linear loop program $L$ is called \emph{deterministic} iff its loop transition $\loopt$ can be written in the following
form
\[
(\vec{x}, \vec{x}') \in \loopt
\;\Longleftrightarrow\;
G \vec{x} \leq \vec{g} \;\land\; \vec{x}' = M \vec{x} + \vec{m}
\]
for some matrices $G \in \mathbb{R}^{n \times m}$, $M \in \mathbb{R}^{n \times n}$, and
vectors $\vec{g} \in \mathbb{R}^m$ and $\vec{m} \in \mathbb{R}^n$.
\end{definition}

\begin{definition}[Nontermination]\label{def:nontermination}
A linear lasso program $L$ is \emph{nonterminating} iff
there is an infinite sequence of states $\vec{x_0}, \vec{x_1}, \ldots$,
called an \emph{infinite execution of $L$}, such that
$(\vec{x_0}, \vec{x_1}) \in \stemt$ and
$(\vec{x_t}, \vec{x_{t+1}}) \in \loopt$ for all $t \geq 1$.
\end{definition}

\subsection{Jordan Normal Form}
\label{ssec:jordan-nf}

Let $M \in \mathbb{R}^{n \times n}$ be a real square matrix.
If there is an invertible square matrix $S$ and a diagonal matrix $D$ such that
$M = SDS^{-1}$, then $M$ is called \emph{diagonalizable}.
The column vectors of $S$ form the basis over which $M$ has diagonal form.
In general, real matrices are not diagonalizable.
However, every real square matrix $M$ with real eigenvalues
has a representation which is almost diagonal,
called \emph{Jordan normal form}.
This is a matrix that is zero except for the eigenvalues on the diagonal and
one superdiagonal containing ones and zeros.

Formally, a Jordan normal form is a matrix
$J = \mathrm{diag}(J_{i_1}(\lambda_1), \ldots, J_{i_k}(\lambda_k))$
where $\lambda_1, \ldots, \lambda_k$ are the eigenvalues of $M$ and
the real square matrices $J_i(\lambda) \in \mathbb{R}^{i \times i}$
are \emph{Jordan blocks},
\[
J_i(\lambda) :=
\left(
\begin{matrix}
\lambda & 1 & 0 & \ldots & 0 & 0 \\
0 & \lambda & 1 & \ldots & 0 & 0\\
\vdots &  & & \ddots & & \vdots \\
0 & 0 & 0 & \ldots & \lambda & 1 \\
0 & 0 & 0 & \ldots & 0 & \lambda
\end{matrix}
\right).
\]
The subspace corresponding to each distinct eigenvalue is called
\emph{generalized eigenspace} and
their basis vectors \emph{generalized eigenvectors}.

\begin{theorem}[Jordan Normal Form]\label{thm:jordan-nf}
For each real square matrix $M \in \mathbb{R}^{n \times n}$ with real eigenvalues,
there is an invertible real square matrix $V \in \mathbb{R}^{n \times n}$
and a Jordan normal form $J \in \mathbb{R}^{n \times n}$
such that
$M = V J V^{-1}$.
\end{theorem}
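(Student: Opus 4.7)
The plan is to prove the theorem by reducing the general case to the nilpotent case via a generalized eigenspace decomposition, and then constructing a Jordan basis on each nilpotent piece. I would first invoke the fact that, since all eigenvalues of $M$ are real, the characteristic polynomial factors over $\mathbb{R}$ as $\prod_{i=1}^{k}(\lambda - \lambda_i)^{m_i}$ with distinct $\lambda_i \in \mathbb{R}$. By the primary decomposition theorem (a consequence of the Cayley--Hamilton theorem together with coprimality of the factors $(\lambda - \lambda_i)^{m_i}$ and Bezout's identity), one obtains an $M$-invariant direct sum decomposition $\mathbb{R}^n = \bigoplus_{i=1}^{k} E_i$, where $E_i := \ker(M - \lambda_i I)^{m_i}$ is the generalized eigenspace for $\lambda_i$.

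The second step is to reduce to the nilpotent case. On each generalized eigenspace $E_i$, the restriction $N_i := (M - \lambda_i I)|_{E_i}$ is nilpotent by construction, so it suffices to exhibit, for every nilpotent operator $N$ on a finite-dimensional real vector space, a basis with respect to which $N$ has block-diagonal form consisting of Jordan blocks $J_{\ell}(0)$. Concatenating such bases over all $E_i$ and adding $\lambda_i I$ back on each block then yields a basis of $\mathbb{R}^n$ in which $M$ is represented by a Jordan matrix $J$, and collecting the basis vectors as columns of a matrix $V$ gives $M = VJV^{-1}$.

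The main work, and the only technically delicate step, is the construction of a Jordan basis for a nilpotent operator $N$ of nilpotency index $p$. I would proceed by descending induction on $p$: pick a complement $W_p$ of $\ker N^{p-1}$ inside $\ker N^{p}$, choose a basis $v_1^{(p)}, \dots, v_{r_p}^{(p)}$ of $W_p$, and observe that the iterated images $N^{j} v_i^{(p)}$ for $0 \le j \le p-1$ are linearly independent and yield length-$p$ Jordan chains. Then one extends this partial basis to $\ker N^{p-1}$ by picking a complement of $\ker N^{p-2} \oplus N(W_p)$, and continues downward. The two things that must be verified carefully are that the chosen complements can be extended at each stage (which uses $N(\ker N^{j}) \subseteq \ker N^{j-1}$ and a dimension count) and that the resulting collection of chains is linearly independent (which follows by applying $N^{j}$ to a purported relation and using independence of the top-level vectors).

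Finally, since all operations involved---choosing complements of real subspaces, taking kernels, and applying polynomials in $M$ with real coefficients---preserve reality, the resulting matrix $V$ lies in $\mathbb{R}^{n \times n}$ and is invertible by construction. The hypothesis of real eigenvalues is essential only insofar as it guarantees that the characteristic polynomial splits over $\mathbb{R}$; without it, one would be forced into a complexification and the conclusion $V \in \mathbb{R}^{n \times n}$ would fail.
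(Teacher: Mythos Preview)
Your proposal is a correct outline of the standard textbook proof of the Jordan normal form theorem: primary decomposition into generalized eigenspaces via Cayley--Hamilton and Bezout, reduction to the nilpotent case on each summand, and the descending-chain construction of a Jordan basis for a nilpotent operator. The sketch is sound, and the remark about reality of $V$ hinging on the splitting of the characteristic polynomial over $\mathbb{R}$ is exactly the point of the hypothesis.

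However, there is nothing to compare against: the paper does not prove this theorem. It is stated in the preliminaries (\autoref{ssec:jordan-nf}) as a classical result and used as a black box in the completeness proof of \autoref{thm:completeness}. So your proposal is not an alternative to the paper's argument---it simply supplies a proof where the paper gives none. If the intent was to fill in background, your outline is appropriate; if the intent was to reconstruct something the authors proved, be aware that they did not.
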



\section{Geometric Nontermination Arguments}
\label{sec:gnta}

Fix a conjunctive linear lasso program $L = (\stemt, \loopt)$ and
let $A \in \mathbb{R}^{n \times m}$ and $\vec{b} \in \mathbb{R}^m$ define
the loop transition such that
\[
(\vec{x}, \vec{x}') \in \loopt
\;\Longleftrightarrow\;
A \abovebelow{\vec{x}}{\vec{x}'} \leq \vec{b}.
\]

\begin{definition}[Geometric Nontermination Argument]\label{def:gnta}
A tuple \\
$(\vec{x_0}, \vec{x_1}, \vec{y_1}, \ldots, \vec{y_k},
\lambda_1, \ldots, \lambda_k, \mu_1, \ldots, \mu_{k-1})$
is called a \emph{geometric nontermination argument}
for the linear lasso program $L = (\stemt, \loopt)$
iff all of the following statements hold.
\begin{enumerate}
\itemsep2mm
\setlength{\itemindent}{4em}
\item[(domain)]\label{itm:gnta-domain}
  $\vec{x_0}, \vec{x_1}, \vec{y_1}, \ldots, \vec{y_k} \in \mathbb{R}^n$, and
  $\lambda_1, \ldots, \lambda_k, \mu_1, \ldots, \mu_{k-1} \geq 0$
\item[(initiation)]\label{itm:gnta-init}
  $(\vec{x_0}, \vec{x_1}) \in \stemt$
\item[(point)]\label{itm:gnta-point}
  $A \abovebelow{\vec{x_1}}{\vec{x_1} + \sum_i \vec{y_i}} \leq \vec{b}$
\item[(ray)]\label{itm:gnta-ray}
  $A \abovebelow{\vec{y_1}}{\lambda_1 \vec{y_1}} \leq 0$ and
  $A \abovebelow{\vec{y_i}}{\lambda_i \vec{y_i} + \mu_{i-1} \vec{y_{i-1}}} \leq 0$
  for $1 < i \leq k$.
\end{enumerate}
The number $k \geq 0$ is
the \emph{size} of the geometric nontermination argument.
\end{definition}

The existence of a geometric nontermination argument can be checked using an SMT solver.
The constraints given by \hyperref[itm:gnta-domain]{(domain)},
\hyperref[itm:gnta-init]{(init)}, \hyperref[itm:gnta-point]{(point)},
\hyperref[itm:gnta-ray]{(ray)} are nonlinear algebraic constraints.
The satisfiability of these constraints is decidable.
Moreover, if the linear lasso program is given as a deterministic update,
we can compute its eigenvalues.
If the eigenvalues are known,
we can assign values to $\lambda_1, \ldots, \lambda_k$ and
the constraints become linear and can thus be decided efficiently.

\begin{proposition}[Soundness]\label{prop:soundness}
If there is a geometric nontermination argument for a linear lasso program $L$,
then $L$ is nonterminating.
\end{proposition}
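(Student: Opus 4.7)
The plan is to construct an explicit infinite execution
$\vec{x_0},\vec{\xi_1},\vec{\xi_2},\ldots$ of $L$ from the geometric nontermination argument, generalising the closed forms given in the introduction. I set $\vec{\xi_1} := \vec{x_1}$ and
\[
\vec{\xi_{t+1}} \;:=\; \vec{x_1} + \sum_{s=0}^{t-1} Y J^s \vec{1}
\qquad (t \ge 1),
\]
where $Y := (\vec{y_1}, \ldots, \vec{y_k}) \in \mathbb{R}^{n \times k}$ collects the ray directions as columns and $J \in \mathbb{R}^{k \times k}$ is the upper bidiagonal matrix carrying $\lambda_1,\ldots,\lambda_k$ on the diagonal and $\mu_1,\ldots,\mu_{k-1}$ on the superdiagonal. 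Condition \hyperref[itm:gnta-init]{(initiation)} immediately gives $(\vec{x_0},\vec{\xi_1}) \in \stemt$, so everything reduces to verifying $A \abovebelow{\vec{\xi_t}}{\vec{\xi_{t+1}}} \le \vec{b}$ for each $t \ge 1$.

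To unfold this inequality into the three remaining GNTA conditions, I introduce $\vec{y_i^{(s)}}$ for the $i$-th column of $Y J^s$, so that $\vec{y_i^{(0)}} = \vec{y_i}$, the recurrence $\vec{y_i^{(s+1)}} = \lambda_i \vec{y_i^{(s)}} + \mu_{i-1}\vec{y_{i-1}^{(s)}}$ holds under the convention $\mu_0 := 0$, and $\vec{\xi_{t+1}} - \vec{\xi_t} = \sum_{i} \vec{y_i^{(t-1)}}$. A telescoping rearrangement yields
\[
A \abovebelow{\vec{\xi_t}}{\vec{\xi_{t+1}}}
\;=\;
A \abovebelow{\vec{x_1}}{\vec{x_1}+\sum_{i} \vec{y_i}}
\;+\; \sum_{s=0}^{t-2}\sum_{i=1}^{k} A \abovebelow{\vec{y_i^{(s)}}}{\vec{y_i^{(s+1)}}}.
\]
The first summand is $\le \vec{b}$ componentwise by \hyperref[itm:gnta-point]{(point)}, so the whole problem reduces to showing $A \abovebelow{\vec{y_i^{(s)}}}{\vec{y_i^{(s+1)}}} \le \vec{0}$ for all $i$ and all $s \ge 0$.

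This last step is the main obstacle, and it is precisely here that the sign constraints $\lambda_j,\mu_j \ge 0$ from \hyperref[itm:gnta-domain]{(domain)} become indispensable. Since $J$ has nonnegative entries, so does every power $J^s$; writing $J^s \vec{e_i} = \sum_{j \le i} d_{ji}^{(s)} \vec{e_j}$ with $d_{ji}^{(s)} \ge 0$, the associativity $Y J^{s+1} = (YJ) J^s$ lets me express, with the \emph{same} nonnegative coefficients,
\[
\vec{y_i^{(s)}} \;=\; \sum_{j \le i} d_{ji}^{(s)}\, \vec{y_j},
\qquad
\vec{y_i^{(s+1)}} \;=\; \sum_{j \le i} d_{ji}^{(s)}\, \vec{y_j^{(1)}}.
\]
Linearity of $A$ together with \hyperref[itm:gnta-ray]{(ray)} then gives
\[
A \abovebelow{\vec{y_i^{(s)}}}{\vec{y_i^{(s+1)}}}
\;=\; \sum_{j \le i} d_{ji}^{(s)}\, A \abovebelow{\vec{y_j}}{\lambda_j \vec{y_j} + \mu_{j-1}\vec{y_{j-1}}}
\;\le\; \vec{0}.
\]
Without nonnegativity the $d_{ji}^{(s)}$ could have arbitrary signs and the (ray) bounds would fail to combine into the desired inequality, which is the delicate point of the argument. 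Once this is in place, the displays combine to give $A \abovebelow{\vec{\xi_t}}{\vec{\xi_{t+1}}} \le \vec{b}$ for every $t \ge 1$, so $\vec{x_0},\vec{\xi_1},\vec{\xi_2},\ldots$ is an infinite execution of $L$ and $L$ is nonterminating.
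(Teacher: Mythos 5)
Your proof is correct and follows essentially the same route as the paper: both construct the infinite execution $\vec{x_1} + \sum_{s} Y U^s \vec 1$ and exploit that powers of the nonnegative bidiagonal matrix yield nonnegative coefficients, so that each required loop inequality is a nonnegative combination of the (ray) inequalities added to (point). The only difference is bookkeeping — you telescope and verify each single-step pair $\bigl(Y J^s \vec 1, Y J^{s+1} \vec 1\bigr)$ separately, whereas the paper aggregates all steps at once via $Z = \sum_{j=0}^{t-1} U^j$ — which does not change the substance of the argument.
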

\begin{proof}
Define $Y := (\vec{y_1} \ldots \vec{y_k})$ denote the matrix containing the vectors $\vec{y_i}$ as columns,
and define the matrix
\begin{equation}\label{eq:def-U}
U :=
\left(
\begin{matrix}
\lambda_1 & \mu_1 & 0 & \ldots & 0 & 0 \\
0 & \lambda_2 & \mu_2 & \ldots & 0 & 0\\
\vdots &  & & \ddots & & \vdots \\
0 & 0 & 0 & \ldots & \lambda_{n-1} & \mu_{n-1} \\
0 & 0 & 0 & \ldots & 0 & \lambda_n
\end{matrix}
\right).
\end{equation}
Following \autoref{def:nontermination}
we show that the linear lasso program $L$
has the infinite execution
\begin{equation}\label{eq:infinite-execution}
\vec{x_0},\;
\vec{x_1},\;
\vec{x_1} + Y \vec1,\;
\vec{x_1} + Y \vec1 + Y U \vec1,\;
\vec{x_1} + Y \vec1 + Y U \vec1 + Y U^2 \vec1, \ldots
\end{equation}
From \hyperref[itm:gnta-init]{(init)} we get $(\vec{x_0}, \vec{x_1}) \in \stemt$.
It remains to show that
\begin{equation}\label{eq:execution-induction1}
\left(
\vec{x_1} + \sum_{j=0}^{t-1} Y U^j \vec1,\;
\vec{x_1} + \sum_{j=0}^t Y U^j \vec1
\right) \in \loopt
\text{ for all $t \in \mathbb{N}$.}
\end{equation}
According to \hyperref[itm:gnta-domain]{(domain)}
the matrix $U$ has only nonnegative entries, so
the same holds for the matrix $Z := \sum_{j=0}^{t-1} U^j$.
Hence $Z \vec1$ has only nonnegative entries and thus
$YZ\vec1$ can be written as $\sum_{i=1}^k \alpha_i \vec{y_i}$
for some $\alpha_i \geq 0$.
We multiply the inequality number $i$ from \hyperref[itm:gnta-ray]{(ray)}
with $\alpha_i$ and get
\begin{equation}\label{eq:scaled-ray}
A \abovebelow{
	\alpha_i \vec{y_i}
}{
	\alpha_i \lambda_i \vec{y_i} + \alpha_i \mu_{i-1} \vec{y_{i-1}}
}
\leq 0.
\end{equation}
where we use the convenience notation $\vec{y_0} := 0$ and $\mu_0 := 0$.
Now we sum \eqref{eq:scaled-ray} for all $i$ and
add \hyperref[itm:gnta-point]{(point)} to get
\begin{equation}\label{eq:execution-induction2}
A \abovebelow{
	\vec{x_1} + \sum_i \alpha_i \vec{y_i}
}{
	\vec{x_1} + \sum_i \vec{y_i} + \sum_i (\alpha_i \lambda_i \vec{y_i}
	+ \alpha_i \mu_{i-1} \vec{y_{i-1}})
}
\leq \vec b.
\end{equation}
By definition of $\alpha_i$, we have
\[
    \vec{x_1} + \sum_{i=1}^k \alpha_i \vec{y_i}
~=~ \vec{x_1} + Y Z \vec1
~=~ \vec{x_1} + \sum_{j=0}^{t-1} Y U^j \vec1
\]
and
\begin{align*}
  \vec{x_1} + \sum_{i=1}^k \vec{y_i}
  + \sum_{i=1}^k (\alpha_i \lambda_i \vec{y_i}
  + \alpha_i \mu_{i-1} \vec{y_{i-1}})
&= \vec{x_1} + Y\vec1 + \sum_{i=1}^k \alpha_i YU e_i \\
&= \vec{x_1} + Y\vec1 + YU Z\vec1 \\
&= \vec{x_1} + \sum_{j=0}^t Y U^j \vec1.
\end{align*}
Therefore \eqref{eq:execution-induction1} and \eqref{eq:execution-induction2}
are the same, which concludes this proof.
\qed
\end{proof}

\begin{example}[Closed Form of the Infinite Execution]
The following is the closed form of
the state $\vec{x_t} = \sum_{k=0}^t Y U^k \vec 1$
in the infinite execution \eqref{eq:infinite-execution}.
Let $U =: N + D$ where $N$ is a nilpotent matrix
and $D$ is a diagonal matrix.
\begin{align*}
   Y U^k \vec 1
= Y \left( \sum_{i=0}^n \binom{k}{i} N^i D^{k-i} \right) \vec 1
= \sum_{j=1}^n y_j \sum_{i=0}^{n - j + 1} \binom{k}{i} \lambda_{n-j-i}^{k-i} \prod_{\ell=j}^{j+i-1} \mu_\ell
\tag*{$\Diamond$}
\end{align*}
\renewcommand{\qed}{}
\end{example}


\section{Completeness}
\label{sec:completeness}

First we show that a linear loop program has a GNTA
if it has is a bounded infinite execution.
In the next section we use this to prove our completeness result.

\subsection{Bounded Infinite Executions}
\label{ssec:bounded-infinite-executions}

Let $|\cdot|: \mathbb{R}^n\rightarrow \mathbb{R}$ denote some norm.
We call an infinite execution $(\vec{x}_t)_{t \geq 0}$ \emph{bounded} iff
there is a real number $d \in \mathbb{R}$ such that
the norm of each state is bounded by $d$,
i.e., $|\vec{x}_t|\leq d$ for all $t$
(in $\mathbb{R}^n$ the notion of boundedness is independent of the choice of the norm).

\begin{lemma}[Fixed Point]\label{lem:fixed-point}
Let $L = (true, \loopt)$ be a linear loop program.
The linear loop program $L$ has a bounded infinite execution
if and only if
there is a fixed point $\vec{x}^\ast \in \mathbb{R}^n$ such that
$(\vec{x}^\ast, \vec{x}^\ast) \in \loopt$.
\end{lemma}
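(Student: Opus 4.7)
The backward implication is immediate: if $(\vec{x}^*, \vec{x}^*) \in \loopt$, then the constant sequence $\vec{x}^*, \vec{x}^*, \vec{x}^*, \ldots$ is an infinite execution, and it is trivially bounded.

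For the forward direction, my plan is to average. Let $(\vec{x}_t)_{t\geq 0}$ be a bounded infinite execution and set the Cesàro mean
\[
\vec{s}_T := \frac{1}{T}\sum_{t=0}^{T-1} \vec{x}_t.
\]
The loop relation $\loopt = \{(\vec{x},\vec{x}') : A\abovebelow{\vec{x}}{\vec{x}'} \leq \vec{b}\}$ is a convex polyhedron, in particular a closed convex subset of $\mathbb{R}^{2n}$. Since $(\vec{x}_t,\vec{x}_{t+1}) \in \loopt$ for every $t$, convexity yields
\[
\left(\vec{s}_T,\; \vec{s}_T + \tfrac{1}{T}(\vec{x}_T - \vec{x}_0)\right)
~=~ \frac{1}{T}\sum_{t=0}^{T-1} (\vec{x}_t,\vec{x}_{t+1}) \;\in\; \loopt.
\]

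Because the sequence is bounded, the Cesàro means $(\vec{s}_T)_T$ are bounded as well, so by Bolzano--Weierstraß some subsequence $(\vec{s}_{T_k})_k$ converges to a limit $\vec{x}^* \in \mathbb{R}^n$. Boundedness also gives $\frac{1}{T_k}(\vec{x}_{T_k} - \vec{x}_0) \to \vec{0}$, hence the second coordinate converges to $\vec{x}^*$ as well. Since $\loopt$ is closed, the limit point $(\vec{x}^*,\vec{x}^*)$ lies in $\loopt$, so $\vec{x}^*$ is the desired fixed point.

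The only delicate step is justifying that the finite convex combinations remain in $\loopt$ and that the limit stays inside; this is routine once one notes that $\loopt$ is a closed convex polyhedron, so I do not expect any real obstacle. The main conceptual move is the use of Cesàro averaging to turn a bounded orbit into a single fixed point, which works here precisely because $\loopt$ is convex.
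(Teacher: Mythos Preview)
Your proof is correct and follows essentially the same strategy as the paper: form Ces\`aro averages of the bounded execution, use convexity of the polyhedron $\loopt$ to keep the averaged pairs inside, and pass to a limit using closedness. The one technical difference is that the paper asserts the full sequence of averages is Cauchy (arguing only that $|\vec z_k - \vec z_{k+1}|\to 0$, which by itself does not imply Cauchy), whereas you extract a convergent subsequence via Bolzano--Weierstra\ss; your route is the cleaner one here.
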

\begin{proof}
If there is a fixed point $\vec{x}^\ast$, then the loop has the infinite bounded
execution $\vec{x}^\ast, \vec{x}^\ast, \ldots$.
Conversely, let $(\vec{x}_t)_{t \geq 0}$ be an infinite bounded execution.
Boundedness implies that there is an $d \in \mathbb{R}$ such that $|\vec{x}_t| \leq d$ for all $t$.
Consider the sequence $\vec{z}_k := \frac{1}{k} \sum_{t=1}^k \vec{x}_t$.
\begin{align*}
| \vec{z}_k - \vec{z}_{k+1} |
&= \left| \frac{1}{k} \sum_{t=1}^k \vec{x}_t - \frac{1}{k+1} \sum_{t=1}^{k+1} \vec{x}_t \right|
 = \frac{1}{k(k+1)} \left| (k+1) \sum_{t=1}^k \vec{x}_t -  k \sum_{t=1}^{k+1} \vec{x}_t \right| \\
&= \frac{1}{k(k+1)} \left| \sum_{t=1}^k \vec{x}_t - k \vec{x}_{k+1} \right|
\leq \frac{1}{k(k+1)} \left( \sum_{t=1}^k |\vec{x}_t| + k |\vec{x}_{k+1}| \right) \\
&\leq \frac{1}{k(k+1)} (k\cdot d + k\cdot d)
= \frac{2d}{k+1} \longrightarrow 0 \text{ as } k \to \infty.
 \end{align*}
Hence the sequence $(\vec{z}_k)_{k \geq 1}$ is a Cauchy sequence
and thus converges to some $\vec{z}^\ast \in \mathbb{R}^n$.
We will show that $\vec{z}^\ast$ is the desired fixed point.

For all $t$, the polyhedron $Q := \{ \abovebelow{\vec{x}}{\vec{x}'} \mid A \abovebelow{\vec{x}}{\vec{x}'} \leq b \}$ contains $\abovebelow{\vec{x}_t}{\vec{x}_{t+1}}$ and is convex.
Therefore for all $k \geq 1$,
$$
\frac{1}{k} \sum_{t=1}^k \abovebelow{\vec{x}_t}{\vec{x}_{t+1}} \in Q.
$$
Together with
$$
\abovebelow{\vec{z}_k}{\frac{k+1}{k} \vec{z}_{k+1}}
= \frac{1}{k} \abovebelow{\vec{0}}{\vec{x}_1} + \frac{1}{k} \sum_{t=1}^k \abovebelow{\vec{x}_t}{\vec{x}_{t+1}}
$$
we infer
$$
\left( \abovebelow{\vec{z}_k}{\frac{k+1}{k} \vec{z}_{k+1}} - \frac{1}{k} \abovebelow{\vec{0}}{\vec{x}_1} \right) \in Q,
$$
and since $Q$ is topologically closed we have
\[
\abovebelow{\vec{z}^\ast}{\vec{z}^\ast} =
\lim_{k \to \infty}
  \left( \abovebelow{\vec{z}_k}{\frac{k+1}{k} \vec{z}_{k+1}} - \frac{1}{k} \abovebelow{\vec{0}}{\vec{x}_1} \right) \in Q.
\eqno\qed
\]
\end{proof}

Note that \autoref{lem:fixed-point} does not transfer to
lasso programs: there might only be one fixed point and the stem
might exclude this point
(e.g., $a = -0.5$ and $b = 3.5$ in example \autoref{fig:IntroExA}).

Because fixed points give rise to trivial geometric nontermination arguments,
we can derive a criterion
for the existence of geometric nontermination arguments
from \autoref{lem:fixed-point}.

\begin{corollary}[Bounded Infinite Executions]
\label{cor:bounded-execution}
If the linear loop program $L = (true, \loopt)$ has a bounded infinite execution,
then it has a geometric nontermination argument of size $0$.
\end{corollary}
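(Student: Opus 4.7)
The plan is to apply Lemma \ref{lem:fixed-point} to obtain a fixed point of the loop and then simply repackage it as a GNTA of size zero. First I would invoke Lemma \ref{lem:fixed-point} on the given bounded infinite execution to produce some $\vec{x}^\ast \in \mathbb{R}^n$ with $(\vec{x}^\ast, \vec{x}^\ast) \in \loopt$.

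Next I would write down the candidate tuple. According to Definition \ref{def:gnta}, a GNTA of size $k=0$ consists only of two vectors $(\vec{x_0}, \vec{x_1})$, with no $\vec{y_i}$'s, no $\lambda_i$'s, and no $\mu_i$'s. I would choose $\vec{x_0} := \vec{x}^\ast$ and $\vec{x_1} := \vec{x}^\ast$, and then verify the four clauses of Definition \ref{def:gnta} one at a time. The (domain) clause holds trivially because there are no scalar parameters to constrain. The (initiation) clause holds because the lasso program is actually a loop program, so $\stemt$ is equivalent to $true$ and contains every pair. The (point) clause $A \abovebelow{\vec{x_1}}{\vec{x_1} + \sum_i \vec{y_i}} \leq \vec{b}$ collapses, since the sum over the empty index set vanishes, to $A \abovebelow{\vec{x}^\ast}{\vec{x}^\ast} \leq \vec{b}$, which is exactly the fixed-point property produced by Lemma \ref{lem:fixed-point}. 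The (ray) clause is an empty conjunction and therefore vacuously true.

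There is no substantial obstacle: the corollary is essentially a translation of Lemma \ref{lem:fixed-point} into the vocabulary of Definition \ref{def:gnta}. The only thing worth stating explicitly is the convention that, for $k=0$, both the sum in the (point) clause and the family of inequalities in the (ray) clause are empty, so that the definition degenerates precisely to the existence of a fixed point reachable after one stem step.
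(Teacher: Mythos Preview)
Your proposal is correct and follows essentially the same approach as the paper: invoke Lemma~\ref{lem:fixed-point} to obtain a fixed point $\vec{x}^\ast$, set $\vec{x_0}=\vec{x_1}=\vec{x}^\ast$, and observe that the (point) and (ray) clauses are satisfied (the latter vacuously). Your write-up is in fact more explicit than the paper's own proof, which merely states that this choice ``satisfies (point) and (ray)''.
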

\begin{proof}
By \autoref{lem:fixed-point} there is a fixed point $\vec{x}^\ast$ such that
$(\vec{x}^\ast, \vec{x}^\ast) \in \loopt$.
We choose
$\vec{x_0} = \vec{x_1} = \vec{x}^\ast$
which satisfies (point) and (ray)
and thus is a geometric nontermination argument for $L$.
\qed
\end{proof}

\begin{example}\label{ex:strict}
Note that according to our definition of a linear lasso program,
the relation $\loopt$ is a topologically closed set.
If we allowed the formula defining $\loopt$ to also contain strict inequalities,
\autoref{lem:fixed-point} no longer holds:
the following program is nonterminating and has a bounded infinite execution,
but it does not have a fixed point.
However, the topological closure of the relation $\loopt$
contains the fixed point $a = 0$.
\begin{center}
\begin{minipage}{29mm}
\begin{lstlisting}
while (a > 0):
    a := a / 2;
\end{lstlisting}
\end{minipage}
\end{center}
Nevertheless, this example still has the geometric nontermination argument
$\vec{x_1} = 1$, $\vec{y_1} = -0.5$, $\lambda = 0.5$.
\end{example}

\subsection{Nonnegative Eigenvalues}
\label{ssec:nonnegative-eigenvalues}

This section is dedicated to the proof of the following completeness result
for deterministic linear loop programs.

\begin{theorem}[Completeness]\label{thm:completeness}
If a deterministic linear loop program $L$ of the form
\texttt{while ($G\vec x \leq \vec g$) do $\vec x := M\vec x + \vec m$}
with $n$ variables is nonterminating
and $M$ has only nonnegative real eigenvalues,
then there is a geometric nontermination argument for $L$ of size at most $n$.
\end{theorem}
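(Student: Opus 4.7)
The plan is to split on whether the infinite execution $(\vec{x_t})_{t\ge1}$ is bounded, and in the unbounded case use the Jordan structure of $M$ to construct the $\vec{y_i}$ as scaled generalized eigenvectors. Since $L$ is a loop program, $\stemt$ is \emph{true} so (initiation) is free (take $\vec{x_0} = \vec{x_1}$), and (domain), (point), (ray) are the only constraints to check.

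If the execution is bounded, Corollary~\ref{cor:bounded-execution} immediately yields a GNTA of size $0$. Otherwise, by Theorem~\ref{thm:jordan-nf} I would write $M = VJV^{-1}$ in real Jordan normal form, grouping the columns of $V$ into Jordan chains $\vec{v_{j,1}}, \ldots, \vec{v_{j,s_j}}$ for each block $j$ with eigenvalue $\lambda_j \ge 0$, so that $M\vec{v_{j,l}} = \lambda_j \vec{v_{j,l}} + \vec{v_{j,l-1}}$ under the convention $\vec{v_{j,0}} := \vec{0}$. The $\vec{y_i}$'s are then positive multiples $c_{j,l}\vec{v_{j,l}}$, ordered within each chain: the equality part of (ray), $M\vec{y_i} = \lambda_i\vec{y_i} + \mu_{i-1}\vec{y_{i-1}}$ (obtained from the two inequalities $\pm(\vec{x'} - M\vec{x} - \vec{m}) \le 0$ encoding $\loopt$), holds automatically with $\lambda_i := \lambda_j \ge 0$, with $\mu_{i-1} := c_{j,l}/c_{j,l-1} \ge 0$ inside a chain and $\mu_{i-1} := 0$ at chain boundaries, taking care of (domain).

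To satisfy (point), $\sum_i \vec{y_i} = \vec{x_2} - \vec{x_1}$, expand $\vec{x_1} = \vec{x^*} + \sum_{j,l} a_{j,l}\vec{v_{j,l}}$, where $\vec{x^*}$ is a fixed point of $\vec{x} \mapsto M\vec{x} + \vec{m}$ when $M - I$ is invertible, and a particular solution of the inhomogeneous recurrence otherwise. A short computation yields $\vec{x_2} - \vec{x_1} = \sum_{j,l}[a_{j,l}(\lambda_j - 1) + a_{j,l+1}]\vec{v_{j,l}}$, which dictates the $c_{j,l}$. At most $n$ of these are nonzero, so the constructed GNTA has size $k \le n$ as claimed.

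The main obstacle is the remaining inequality in (ray), $G\vec{y_i} \le \vec{0}$, i.e., each chosen $\vec{y_i}$ must lie in the recession cone of the guard polyhedron $P = \{\vec{x} : G\vec{x} \le \vec{g}\}$. My plan is to replace $\vec{x_1}$ by a sufficiently late iterate $\vec{x_T}$: components of $\vec{x_1} - \vec{x^*}$ lying in Jordan blocks with $\lambda_j < 1$ decay geometrically under $M^{T-1}$, so for $T$ large enough they either vanish (when $\lambda_j = 0$) or can be absorbed into a refined fixed point on the corresponding $M$-invariant subspace (existence guaranteed by applying Lemma~\ref{lem:fixed-point} to the bounded restriction of the loop to that subspace). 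For the remaining blocks with $\lambda_j \ge 1$, the iterates go to infinity along $\vec{v_{j,l}}$ while remaining in $P$; dividing $G\vec{x_t} \le \vec{g}$ by the norm of the dominant coefficient and passing to the limit forces the signed direction of the leading generalized eigenvector into the recession cone, so by globally flipping the sign of each Jordan chain $c_{j,l} > 0$ can be arranged. The technically most delicate point is eigenvalue-$1$ Jordan blocks, where polynomial rather than exponential growth couples to $\vec{m}$: here $\vec{x^*}$ must be chosen to kill sub-dominant polynomial terms, and sign-consistency must be maintained along entire chains so that all $\mu$'s remain nonnegative.
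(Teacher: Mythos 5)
There is a genuine gap, and it is exactly the point the paper flags in \autoref{ex:U-is-not-Jordan}: you assume the rays can be taken as scaled generalized eigenvectors, i.e.\ $\vec{y_i} = c_{j,l}\vec{v_{j,l}}$ with the nilpotent couplings $\mu$ confined to the interior of Jordan chains and $\mu = 0$ at chain boundaries. This structural assumption is false. Consider the loop \texttt{while ($a-b\geq 0 \land b \geq 0$) do $a:=3a;\ b:=b+1$}. Here $M = \mathrm{diag}(3,1)$ is already diagonal, so in your scheme every chain has length one and all $\mu$'s would be $0$; but the eigenvector $(0,1)^T$ of the eigenvalue $1$ satisfies neither $G(0,1)^T \leq \vec 0$ nor $G(0,-1)^T \leq \vec 0$, so no sign-flip or positive scaling of that chain can satisfy the inequality part of (ray). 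A GNTA for this nonterminating program exists, but only with $\mu_1 = 1 > 0$ coupling the eigenvalue-$3$ direction into the second ray ($\vec{y_2} = (6,1)^T$), i.e.\ the matrix $U$ of \eqref{eq:def-U} is deliberately \emph{not} the Jordan form of $M$. Your limiting argument (``divide $G\vec{x_t}\leq\vec g$ by the dominant coefficient and pass to the limit'') only places the \emph{dominant} direction of each block in the recession cone, and even that only modulo the directions already handled; the sub-dominant generalized eigenvectors of a chain need not lie in the cone for either sign, so ``globally flipping the sign of each Jordan chain'' does not rescue (ray). Repairing this requires correcting each $\vec{v_j}$ by a vector $\vec{u_j}$ from the faster-growing directions to obtain $\vec{w_j}=\vec{v_j}+\vec{u_j}\in\mathcal Y$, and then re-mixing with a nonnegative lower-triangular matrix to restore the recurrence $M\vec{y_i}=\lambda_i\vec{y_i}+\mu_{i-1}\vec{y_{i-1}}$ with nonnegative $\mu$'s across eigenvalue boundaries --- this is the content of Part~1 of the paper's proof and is where the real work lies.

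A second, related problem is (point). For a deterministic loop the point condition forces $\sum_i \vec{y_i} = M\vec{x_1}+\vec m - \vec{x_1}$ exactly, so once you commit to rays along fixed directions the coefficients $c_{j,l}$ are dictated and may clash with the sign constraints needed for (ray) and $\mu\geq 0$; passing to a late iterate $\vec{x_T}$ handles the contracting blocks but not this clash. The paper resolves it by keeping extra degrees of freedom (the shift $\vec{x_1}=\vec x + Y\vec\gamma$ with $\vec\gamma\geq\vec 0$ and the scalings $\vec\beta$), and the eigenvalue-$1$ blocks, which you correctly identify as delicate, in fact require a separate argument: the sign of the relevant component of $\vec m$ is controlled via \autoref{lem:eigenvalue-1}, which shows that if $GN^{k-1}\vec m \not\leq \vec 0$ the loop would be terminating (via a nested ranking function), together with \autoref{lem:loop-disassembly} to discard subspaces carrying bounded executions. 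Your sketch gestures at ``choosing $\vec{x}^\ast$ to kill sub-dominant polynomial terms'' but gives no argument that replaces these two lemmas, so the eigenvalue-$1$ case remains open in your write-up even after the structural issue above is fixed.
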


To prove this completeness theorem,
we need to construct a GNTA from a given infinite execution.
The following lemma shows that we can restrict our construction
to exclude all linear subspaces that have a bounded execution.

\begin{lemma}[Loop Disassembly]
\label{lem:loop-disassembly}
Let $L = (true, \loopt)$ be a linear loop program over
$\mathbb{R}^n = \mathcal{U} \oplus \mathcal{V}$
where $\mathcal{U}$ and $\mathcal{V}$ are linear subspaces of $\mathbb{R}^n$.
Suppose $L$ is nonterminating and
there is an infinite execution
that is bounded when projected to the subspace $\mathcal{U}$.
Let $\vec{x}^\mathcal{U}$ be the fixed point in $\mathcal{U}$
that exists according to \autoref{lem:fixed-point}.
Then the linear loop program $L^{\mathcal{V}}$
that we get by projecting to the subspace $\mathcal{V} + \vec{x}^\mathcal{U}$
is nonterminating.
Moreover, if $L^{\mathcal{V}}$ has a GNTA of size $k$,
then $L$ has a GNTA of size $k$.
\end{lemma}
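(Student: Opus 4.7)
The claim naturally splits into two statements: first, that $L^\mathcal{V}$ is nonterminating, and second, that any GNTA of $L^\mathcal{V}$ lifts to a GNTA of $L$ of the same size. I would dispatch the second statement first because it is essentially definitional. By construction, the loop relation $\loopt^\mathcal{V}$ consists of those pairs in $\loopt$ whose endpoints both lie in the affine subspace $\mathcal{V}+\vec{x}^\mathcal{U}$, and direction vectors in $\mathcal{V}$ embed into $\mathbb{R}^n$. Hence, given a GNTA $(\vec{x_0},\vec{x_1},\vec{y_1},\ldots,\vec{y_k},\lambda_1,\ldots,\mu_{k-1})$ for $L^\mathcal{V}$, I would take the very same tuple as the candidate for $L$: (domain) is identical, (initiation) is vacuous because $\stemt = true$, and both (point) and (ray) are systems of inequalities in $A$ and $\vec{b}$ that carry over unchanged from $L^\mathcal{V}$ to $L$.

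For nontermination of $L^\mathcal{V}$, I would decompose the given infinite execution as $\vec{x}_t=\vec{u}_t+\vec{v}_t$ with $\vec{u}_t\in\mathcal{U}$ and $\vec{v}_t\in\mathcal{V}$. The projection of $\loopt$ onto $\mathcal{U}\times\mathcal{U}$ is itself a linear loop program (projections of polyhedra are polyhedra), and $(\vec{u}_t)_t$ is a bounded infinite execution for it, so \autoref{lem:fixed-point} supplies the fixed point $\vec{x}^\mathcal{U}\in\mathcal{U}$. To promote this to an infinite execution of $L^\mathcal{V}$, I would mimic the Cesàro averaging from the proof of \autoref{lem:fixed-point} along all time shifts simultaneously: set $\vec{x}_t^{[K]} := \tfrac{1}{K}\sum_{s=0}^{K-1}\vec{x}_{t+s}$. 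By convexity of $\loopt$, each pair $(\vec{x}_t^{[K]},\vec{x}_{t+1}^{[K]})$ lies in $\loopt$, and the same telescoping-Cauchy estimate as in \autoref{lem:fixed-point} shows that the $\mathcal{U}$-projections $\vec{u}_t^{[K]}$ converge to $\vec{x}^\mathcal{U}$ as $K\to\infty$ for every fixed $t$. Extracting a diagonal subsequence $K_n\to\infty$ along which the full trajectories $(\vec{x}_t^{[K_n]})_t$ converge pointwise, and invoking topological closedness of $\loopt$, yields a limit execution of the form $(\vec{x}^\mathcal{U}+\vec{v}_t^\ast)_t$, i.e., an infinite execution of $L^\mathcal{V}$.

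The main obstacle is justifying the diagonal extraction in the previous paragraph: a priori only the $\mathcal{U}$-component of $\vec{x}_t^{[K]}$ is known to be bounded uniformly in $K$, whereas the $\mathcal{V}$-components may drift. Handling this cleanly will likely require exploiting structural features of the intended decomposition. In the setting of \autoref{thm:completeness} one chooses $\mathcal{U}$ and $\mathcal{V}$ as complementary $M$-invariant subspaces coming from the Jordan form, so that the $\mathcal{V}$-part of the dynamics is autonomous and the substitution $\vec{x}_t\mapsto\vec{x}^\mathcal{U}+\vec{v}_t$ can be analyzed directly; alternatively, a normalization argument can isolate an unbounded direction at infinity in $\mathcal{V}$ which itself is a ray for $\loopt$. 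Once an execution of $L^\mathcal{V}$ is secured in this way, the trivial lifting from the first paragraph finishes the proof.
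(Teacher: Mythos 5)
Your second half is fine and coincides with the paper's: the paper lifts a GNTA of $L^{\mathcal{V}}$ to one of $L$ by adding $\vec{x}^\mathcal{U}$ to $\vec{x_0}$ and $\vec{x_1}$ (and embedding the rays with zero $\mathcal{U}$-component); your ``same tuple'' variant is the identical argument under the convention that states of $L^{\mathcal{V}}$ already carry the offset $\vec{x}^\mathcal{U}$.

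The gap is in the first half, and you name it yourself without closing it. The paper's proof obtains, via the averaging construction of \autoref{lem:fixed-point} applied to the given execution, an infinite execution $\vec{z_0}, \vec{z_1}, \ldots$ of $L$ whose projection to $\mathcal{U}$ is the fixed point $\vec{x}^\mathcal{U}$, and then observes that projecting this execution to $\mathcal{V}+\vec{x}^\mathcal{U}$ gives an infinite execution of $L^{\mathcal{V}}$. You set up exactly this averaging (the shifted Ces\`aro means $\vec{x}_t^{[K]}$, membership of each averaged pair in $\loopt$ by convexity, convergence of the $\mathcal{U}$-projections to $\vec{x}^\mathcal{U}$), but the decisive step --- passing from ``the $\mathcal{U}$-projections converge'' to ``there exists an execution whose $\mathcal{U}$-component \emph{equals} $\vec{x}^\mathcal{U}$'' --- is left unproven: your diagonal extraction needs pointwise precompactness of the full averaged trajectories, and only their $\mathcal{U}$-components are bounded uniformly in $K$. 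You acknowledge this and then only speculate about remedies (``exploiting structural features of the intended decomposition'', ``a normalization argument''), which is not an argument. Moreover, the fallback you sketch --- assuming $\mathcal{U}$ and $\mathcal{V}$ are complementary $M$-invariant subspaces as in the application inside \autoref{thm:completeness} --- would prove a weaker statement than the lemma, which is asserted for an arbitrary (possibly nondeterministic) conjunctive loop relation and arbitrary complementary linear subspaces. So while you have correctly located the delicate point (which the paper itself treats very briskly, simply asserting that the construction of \autoref{lem:fixed-point} delivers such an execution), your proposal does not establish the nontermination of $L^{\mathcal{V}}$, and hence does not prove the lemma; to finish along these lines you would have to control the unbounded $\mathcal{V}$-components, e.g.\ by decomposing the transition polyhedron into a polytope plus recession cone and treating the drifting part as a recession direction, rather than hoping for pointwise convergence of the averaged trajectories.
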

\begin{proof}
Without loss of generality,
we are in the basis of $\mathcal{U}$ and $\mathcal{V}$ so that
these spaces are nicely separated by the use of different variables.
Using the infinite execution of $L$ that is bounded on $\mathcal{U}$
we can do the construction from the proof of \autoref{lem:fixed-point}
to get an infinite execution $\vec{z_0}, \vec{z_1}, \ldots$
that yields the fixed point $\vec{x}^\mathcal{U}$
when projected to $\mathcal{U}$.
We fix $\vec{x}^\mathcal{U}$ in the loop transition by replacing all
variables from $\mathcal{U}$ with the values from $\vec{x}^\mathcal{U}$
and get the linear loop program $L^{\mathcal{V}}$
(this is the projection to $\mathcal{V} + \vec{x}^\mathcal{U}$).
Importantly, the projection of $\vec{z_0}, \vec{z_1}, \ldots$
to $\mathcal{V} + \vec{x}^{\mathcal{U}}$
is still an infinite execution,
hence the loop $L^{\mathcal{V}}$ is nonterminating.
Given a GNTA for $L^{\mathcal{V}}$
we can construct a GNTA for $L$
by adding the vector $\vec{x}^\mathcal{U}$ to $\vec{x_0}$ and $\vec{x_1}$.
\qed
\end{proof}

\begin{proof}[of \autoref{thm:completeness}]
The polyhedron corresponding to loop transition of
the deterministic linear loop program $L$ is
\begin{equation}\label{eq:det-loop}
\left(
\begin{matrix}
G & 0 \\
M & -I \\
-M & I
\end{matrix}
\right)
\left(
\begin{matrix}
\vec x \\
\vec{x'}
\end{matrix}
\right)
\leq
\left(
\begin{matrix}
\vec g \\
-\vec m \\
\vec m
\end{matrix}
\right).
\end{equation}
Define $\mathcal{Y}$ to be
the convex cone spanned by the rays of the guard polyhedron:
\[
\mathcal{Y} :=
\{ \vec y \in \mathbb{R}^n \mid G\vec y \leq 0 \}
\]
Let $\overline{\mathcal{Y}}$ be the smallest linear subspace of $\mathbb{R}^n$
that contains $\mathcal{Y}$,
i.e., $\overline{\mathcal{Y}} = \mathcal{Y} - \mathcal{Y}$
using pointwise subtraction,
and let $\overline{\mathcal{Y}}^\bot$ be the linear subspace of $\mathbb{R}^n$
orthogonal to $\overline{\mathcal{Y}}$;
hence $\mathbb{R}^n = \overline{\mathcal{Y}} \oplus \overline{\mathcal{Y}}^\bot$.

Let $P := \{ \vec x \in \mathbb{R}^n \mid G\vec x \leq \vec g \}$
denote the guard polyhedron.
Its projection $P^{\overline{\mathcal{Y}}^\bot}$
to the subspace $\overline{\mathcal{Y}}^\bot$
is again a polyhedron.
By the decomposition theorem for polyhedra~\cite[Cor.~7.1b]{Schrijver99},
$P^{\overline{\mathcal{Y}}^\bot} = Q + C$
for some polytope $Q$ and some convex cone $C$.
However, by definition of the subspace $\overline{\mathcal{Y}}^\bot$,
the convex cone $C$ must be equal to $\{ \vec 0 \}$:
for any $\vec y \in C \subseteq \overline{\mathcal{Y}}^\bot$,
we have $G\vec y \leq \vec 0$,
thus $\vec y \in \mathcal{Y}$,
and therefore $\vec y$ is orthogonal to itself,
i.e., $\vec y = \vec 0$.
We conclude that $P^{\overline{\mathcal{Y}}^\bot}$ must be a polytope,
and thus it is bounded.
By assumption $L$ is nonterminating,
so $L^{\overline{\mathcal{Y}}^\bot}$ is nonterminating,
and since $P^{\overline{\mathcal{Y}}^\bot}$ is bounded,
any infinite execution of $L^{\overline{\mathcal{Y}}^\bot}$
must be bounded.

Let $\mathcal{U}$ denote the direct sum of the generalized eigenspaces
for the eigenvalues $0 \leq \lambda < 1$.
Any infinite execution is necessarily bounded on the subspace $\mathcal{U}$
since on this space the map $\vec{x} \mapsto M\vec{x} + \vec{m}$ is a contraction.
Let $\mathcal{U}^\bot$ denote
the subspace of $\mathbb{R}^n$ orthogonal to $\mathcal{U}$.
The space $\overline{\mathcal{Y}} \cap \mathcal{U}^\bot$
is a linear subspace of $\mathbb{R}^n$ and
any infinite execution in its complement is bounded.
Hence we can turn our analysis to the subspace
$\overline{\mathcal{Y}} \cap \mathcal{U}^\bot + \vec{x}$
for some $\vec{x} \in \overline{\mathcal{Y}}^\bot \oplus \mathcal{U}$
for the rest of the proof
according to \autoref{lem:loop-disassembly}.
From now on,
we implicitly assume that we are in this space
without changing any of the notation.

\paragraph{Part 1.}
In this part we show that
there is a basis $\vec{y_1}, \ldots, \vec{y_k} \in \mathcal{Y}$
such that $M$ turns into a matrix $U$ of the form given in \eqref{eq:def-U}
with $\lambda_1, \ldots, \lambda_k, \mu_1, \ldots, \mu_{k-1} \geq 0$.
Since we allow $\mu_i$ to be positive between different eigenvalues
(\autoref{ex:U-is-not-Jordan} illustrates why),
this is not necessarily a Jordan normal form and
the vectors $\vec{y_i}$ are not necessarily generalized eigenvectors.

We choose a basis $\vec{v_1}, \ldots, \vec{v_k}$ such that
$M$ is in Jordan normal form with the eigenvalues ordered by size
such that the largest eigenvalues come first.
Define $\mathcal{V}_1 := \overline{\mathcal{Y}} \cap \mathcal{U}^\bot$
and let $\mathcal{V}_1 \supset \ldots \supset \mathcal{V}_k$ be
a strictly descending chain of linear subspaces where $\mathcal{V}_i$
is spanned by $\vec{v_i}, \ldots, \vec{v_k}$.

We define a basis $\vec{w_1}, \ldots, \vec{w_k}$
by doing the following for each Jordan block of $M$,
starting with $i = 1$.
Let $M^{(i)}$ be the projection of $M$ to the linear subspace $\mathcal{V}_i$
and let $\lambda$ be the largest eigenvalues of $M^{(i)}$.
The $m$-fold iteration of a Jordan block $J_\ell(\lambda)$
for $m \geq \ell$ is given by
\begin{equation}\label{eq:iterated-Jordan-block}
J_\ell(\lambda)^m =
\left(
\begin{matrix}
\lambda^m & \tbinom{m}{1}\lambda^{m-1}
                      & \dots & \tbinom{m}{\ell}\lambda^{m-\ell} \\
          & \lambda^m & \dots  & \tbinom{m}{\ell-1}\lambda^{m-\ell+1} \\
          &           & \ddots & \vdots \\
0         &           &        & \lambda^m
\end{matrix}
\right) \in \mathbb{R}^{\ell \times \ell}.
\end{equation}
Let $\vec{z_0}, \vec{z_1}, \vec{z_2}, \ldots$
be an infinite execution of the loop $L$
in the basis $\vec{v_i}, \ldots, \vec{v_k}$
projected to the space $\mathcal{V}_i$.
Since by \autoref{lem:loop-disassembly} we can assume that
there are no fixed points on this space,
$|\vec{z_t}| \to \infty$ as $t \to \infty$ in each of the top $\ell$ components.
Asymptotically, the largest eigenvalue $\lambda$ dominates and
in each row of $J_k(\lambda_i)^m$ \eqref{eq:iterated-Jordan-block},
the entries $\tbinom{m}{j}\lambda^{m-j}$ in the rightmost column
grow the fastest with an asymptotic rate of $\Theta(m^j \exp(m))$.
Therefore the sign of the component
corresponding to basis vector $\vec{v_{i+\ell}}$
determines whether the top $\ell$ entries tend to $+\infty$ or $-\infty$,
but the top $\ell$ entries of $\vec{z_t}$ corresponding to the top Jordan block
will all have the same sign eventually.
Because no state can violate the guard condition
we have that the guard cannot constraint the infinite execution in
the direction of $\vec{v_j}$ or $-\vec{v_j}$, i.e.,
$G^{\mathcal{V}_i} \vec{v_j} \leq \vec 0$
for each $i \leq j \leq i+\ell$ or
$G^{\mathcal{V}_i} \vec{v_j} \geq \vec 0$ for each $i \leq j \leq i+\ell$,
where $G^{\mathcal{V}_i}$ is the projection of $G$ to the subspace $\mathcal{V}_i$.
So without loss of generality the former holds
(otherwise we use $-\vec{v_j}$ instead of $\vec{v_j}$ for $i \leq j \leq i + \ell$) and
for $i \leq j \leq i+\ell$ we get
$\vec{v_j} \in \mathcal{Y} + \mathcal{V}_i^\bot$
where $\mathcal{V}_i^\bot$ is
the space spanned by $\vec{v_1}, \ldots, \vec{v_{i-1}}$.
Hence there is a $\vec{u_j} \in \mathcal{V}_i^\bot$
such that $\vec{w_j} := \vec{v_j} + \vec{u_j}$ is an element of $\mathcal{Y}$.
Now we move on to the subspace $\mathcal{V}_{i+\ell+1}$,
discarding the top Jordan block.

Let $T$ be the matrix $M$ written in
the basis $\vec{w_1}, \ldots, \vec{w_k}$.
Then $T$ is of upper triangular form:
whenever we apply $M\vec{w_i}$ we get $\lambda_i \vec{w_i} + \vec{u_i}$
($\vec{w_i}$ was an eigenvector in the space $\mathcal{V}_i$)
where $\vec{u_i} \in \mathcal{V}_i^\bot$,
the space spanned by $\vec{v_1}, \ldots, \vec{v_{i-1}}$
(which is identical with the space spanned by $\vec{w_1}, \ldots, \vec{w_{i-1}}$).
Moreover, since we processed every Jordan block entirely,
we have that
for $\vec{w_i}$ and $\vec{w_j}$ from the same generalized eigenspace
($T_{i,i} = T_{j,j}$) that for $i > j$
\begin{equation}\label{eq:T-on-the-same-geigenspace}
T_{j,i} \in \{ 0, 1 \}
\text{ and } T_{j,i} = 1 \text{ implies } i = j+1.
\end{equation}
In other words,
when projected to any generalized eigenspace
$T$ consists only of Jordan blocks.

Now we change basis again in order to get
the upper triangular matrix $U$ defined in \eqref{eq:def-U}
from $T$.
For this we define the vectors
\[
\vec{y_i} := \vec\beta_i \sum_{j=1}^i \alpha_{i,j} \vec{w_j}.
\]
with nonnegative real numbers $\alpha_{i,j} \geq 0$,
$\alpha_{i,i} > 0$, and $\vec\beta > 0$ to be determined later.
Define the matrices $W := (\vec{w_1} \ldots \vec{w_k})$,
$Y := (\vec{y_1} \ldots \vec{y_k})$, and
$\alpha := (\alpha_{i,j})_{1 \leq j \leq i \leq k}$.
So $\alpha$ is a nonnegative lower triangular matrix with a positive diagonal
and hence invertible.
Since $\alpha$ and $W$ are invertible,
the matrix $Y = \diag(\vec\beta) \alpha W$ is invertible as well and thus
the vectors $\vec{y_1}, \ldots, \vec{y_k}$ form a basis.
Moreover, we have
$\vec{y_i} \in \mathcal{Y}$ for each $i$
since $\alpha \geq 0$, $\vec\beta > 0$, and $\mathcal{Y}$ is a convex cone.
Therefore we get
\begin{equation}\label{eq:GY-nonpositive}
GY \leq 0.
\end{equation}

We will first choose $\alpha$.
Define $T =: D + N$
where $D = \diag(\lambda_1, \ldots, \lambda_k)$ is a diagonal matrix and
$N$ is nilpotent.
Since $\vec{w_1}$ is an eigenvector of $M$ we have
$M\vec{y_1}
= M \vec\beta_1 \alpha_{1,1} \vec{w_1}
= \lambda_1 \vec\beta_1 \alpha_{1,1} \vec{w_1}
= \lambda_1 \vec{y_1}$.
To get the form in \eqref{eq:def-U}, we need for all $i > 1$
\begin{equation}\label{eq:y-makes-U}
M\vec{y_i} = \lambda_i \vec{y_i} + \mu_{i-1} \vec{y_{i-1}}.
\end{equation}
Written in the basis $\vec{w_1}, \ldots, \vec{w_k}$
(i.e., multiplied with $W^{-1}$),
\[
(D + N) \vec\beta_i \sum_{j \leq i} \alpha_{i,j} \vec{e_j}
= \lambda_i \vec\beta_i \sum_{j \leq i} \alpha_{i,j} \vec{e_j}
  + \mu_{i-1} \vec\beta_{i-1} \sum_{j < i} \alpha_{i-1,j} \vec{e_j}.
\]
Hence we want to pick $\alpha$ such that
\begin{equation}\label{eq:constraints-on-alpha}
  \sum_{j \leq i} \alpha_{i,j} (\lambda_j - \lambda_i) \vec{e_j}
  + N \sum_{j \leq i} \alpha_{i,j} \vec{e_j}
  - \mu_{i-1} \vec\beta_{i-1} \sum_{j < i} \alpha_{i-1,j} \vec{e_j}
= \vec 0.
\end{equation}
First note that these constraints are independent of $\vec\beta$
if we set $\mu_{i-1} := \vec\beta_{i-1}^{-1} > 0$,
so we can leave assigning a value to $\vec\beta$ to a later part of the proof.

We distinguish two cases.
First, if $\lambda_{i-1} \neq \lambda_i$,
then $\lambda_j - \lambda_i$ is positive for all $j < i$
because larger eigenvalues come first.
Since $N$ is nilpotent and upper triangular,
$N \sum_{j \leq i} \alpha_{i,j} \vec{e_j}$ is a linear combination of
$\vec{e_1}, \ldots, \vec{e_{i-1}}$
(i.e., only the first $i-1$ entries are nonzero).
Whatever values this vector assumes,
we can increase the parameters $\alpha_{i,j}$ for $j < i$
to make \eqref{eq:constraints-on-alpha} larger
and increase the parameters $\alpha_{i-1,j}$ for $j < i$
to make \eqref{eq:constraints-on-alpha} smaller.

Second, let $\ell$ be minimal such that $\lambda_\ell = \lambda_i$
with $\ell \neq i$,
then $\vec{w_\ell}, \ldots, \vec{w_j}$ are from the same generalized eigenspace.
For the rows $1, \ldots, \ell-1$ we can proceed as we did in the first case and
for the rows $\ell, \ldots, i-1$ we note that
by \eqref{eq:T-on-the-same-geigenspace}
$N \vec{e_j} = T_{j-1,j} \vec{e_{j-1}}$.
Hence the remaining constraints \eqref{eq:constraints-on-alpha} are
\[
  \sum_{\ell < j \leq i} \alpha_{i,j} T_{j-1,j} \vec{e_{j-1}}
  - \mu_{i-1} \sum_{\ell \leq j < i} \alpha_{i-1,j} \vec{e_j}
= \vec 0,
\]
which is solved by $\alpha_{i,j+1} T_{j,j+1} = \alpha_{i-1,j}$ for $\ell \leq j < i$.
This is only a problem if there is a $j$ such that $T_{j-1,j} = 0$,
i.e., if there are multiple Jordan blocks for the same eigenvalue.
In this case, we can reduce the dimension of the generalized eigenspace to
the dimension of the largest Jordan block by combining all Jordan blocks:
if $M\vec{y_i} = \lambda \vec{y_i} + \vec{y_{i-1}}$, and
$M\vec{y_j} = \lambda \vec{y_j} + \vec{y_{j-1}}$, then
$M(\vec{y_i} + \vec{y_j})
= \lambda (\vec{y_i} + \vec{y_j})
+ (\vec{y_{i-1}} + \vec{y_{j-1}})$ and
if $M\vec{y_i} = \lambda \vec{y_i} + \vec{y_{i-1}}$, and
$M\vec{y_j} = \lambda \vec{y_j}$, then
$M(\vec{y_i} + \vec{y_j})
= \lambda (\vec{y_i} + \vec{y_j})
+ \vec{y_{i-1}}$.
In both cases we can replace the basis vector $\vec{y_i}$ with
$\vec{y_i} + \vec{y_j}$ without reducing the expressiveness of the GNTA.

Importantly, there are no cyclic dependencies in the values of $\alpha$
because neither one of the coefficients $\alpha$ can be made too large.
Therefore we can choose $\alpha \geq 0$ such that
\eqref{eq:y-makes-U} is satisfied
for all $i > 1$ and hence the basis $\vec{y_1}, \ldots, \vec{y_k}$
brings $M$ into the desired form \eqref{eq:def-U}.

\paragraph{Part 2.}
In this part we construct the geometric nontermination argument and
check the constraints from \autoref{def:gnta}.
Since $L$ has an infinite execution,
there is a point $\vec x$ that fulfills the guard, i.e., $G\vec x \leq \vec g$.
We choose $\vec{x_1} := \vec x + Y\vec\gamma$
with $\vec\gamma \geq \vec 0$ to be determined later.
Moreover, we choose $\lambda_1, \ldots, \lambda_k$ and
$\mu_1, \ldots, \mu_{k-1}$ from the entries of $U$ given in \eqref{eq:def-U}.
The size of our GNTA is $k$,
the number of vectors $\vec{y_1}, \ldots, \vec{y_k}$.
These vectors form a basis of $\overline{\mathcal{Y}} \cap \mathcal{U}^\bot$,
which is a subspace of $\mathbb{R}^n$;
thus $k \leq n$, as required.

The constraint
\hyperref[itm:gnta-domain]{(domain)} is satisfied by construction and
the constraint \hyperref[itm:gnta-init]{(init)} is vacuous
since $L$ is a loop program.
For \hyperref[itm:gnta-ray]{(ray)} note that
from \eqref{eq:GY-nonpositive} and \eqref{eq:y-makes-U} we get
\[
\left(
\begin{matrix}
G & 0 \\
M & -I \\
-M & I
\end{matrix}
\right)
\left(
\begin{matrix}
\vec{y_i} \\
\lambda_i \vec{y_i} + \mu_{i-1} \vec{y_{i-1}}
\end{matrix}
\right)
\leq
\left(
\begin{matrix}
\vec 0 \\
\vec 0 \\
\vec 0
\end{matrix}
\right).
\]
The remainder of this proof shows that
we can choose $\vec\beta$ and $\vec\gamma$ such that
\hyperref[itm:gnta-point]{(point)} is satisfied, i.e., that
\begin{equation}\label{eq:x1-and-beta}
     G\vec{x_1}
\leq \vec g
\text{ and }
  M \vec{x_1} + \vec{m}
= \vec{x_1} + Y\vec{1}.
\end{equation}

The vector $\vec{x_1}$ satisfies the guard since
$G\vec{x_1} = G\vec x + G Y \vec\gamma \leq \vec g + \vec 0$
according to \eqref{eq:GY-nonpositive},
which yields the first part of \eqref{eq:x1-and-beta}.
For the second part we observe the following.
\begin{align}
&~ &
     M\vec{x_1} + \vec m
  &= \vec{x_1} + Y\vec{1} \notag \\
&\Longleftrightarrow\quad &
     (M - I)(\vec x + Y\vec\gamma) + \vec m
  &= Y\vec{1} \notag \\
&\Longleftrightarrow\quad &
     (M - I)\vec x + \vec m
  &= Y\vec{1} - (M - I) Y\vec\gamma \notag \\
\intertext{Since $Y$ is a basis, it is invertible, so}
&\Longleftrightarrow\quad &
     Y^{-1}(M - I)\vec x + Y^{-1}\vec m
  &= \vec{1} - Y^{-1}(M - I) Y\vec\gamma
\notag \\
&\Longleftrightarrow\quad &
     (U - I) Y^{-1}\vec x + Y^{-1}\vec m
  &= \vec{1} - (U - I)\vec\gamma
\notag \\
&\Longleftrightarrow\quad &
     (U - I)\vec{\tilde x} + \vec{\tilde m}
  &= \vec{1} - (U - I)\vec\gamma
\label{eq:constraint-point}
\end{align}
with $\vec{\tilde x} := Y^{-1}\vec x = W^{-1} \alpha^{-1} \diag(\vec\beta)^{-1} \vec x$ and
$\vec{\tilde m} := Y^{-1} \vec m = W^{-1} \alpha^{-1} \diag(\vec\beta)^{-1} \vec m$.
Equation~\eqref{eq:constraint-point} is now conveniently
in the basis $\vec{y_1}, \ldots, \vec{y_k}$
and all that remains to show is that
we can choose $\vec\gamma \geq \vec 0$ and $\vec\beta > 0$
such that \eqref{eq:constraint-point} is satisfied.

We proceed for each (not quite Jordan) block of $U$ separately,
i.e., we assume that
we are looking at the subspace $\vec{y_j}, \ldots, \vec{y_i}$
with $\mu_i = \mu_{j-1} = 0$ and $\mu_k > 0$ for all $j \leq k < i$.
If this space only contains eigenvalues that are larger than $1$,
then $U - I$ is invertible
and has only nonnegative entries.
By using large enough values for $\vec\beta$,
we can make $\vec{\tilde x}$ and $\vec{\tilde m}$ small enough,
such that $\vec 1 \geq (U - I)\vec{\tilde x} + \vec{\tilde m}$.
Then we just need to pick $\vec\gamma$ appropriately.

If there is at least one eigenvalue $1$,
then $U - I$ is not invertible,
so \eqref{eq:constraint-point} could be overconstraint.
Notice that $\mu_\ell > 0$ for all $j \leq \ell < i$,
so only the bottom entry in the vector equation \eqref{eq:constraint-point}
is not covered by $\vec\gamma$.
Moreover, since eigenvalues are ordered in decreasing order and
all eigenvalues in our current subspace are $\geq 1$,
we conclude that the eigenvalue for the bottom entry is $1$.
(Furthermore, $i$ is the highest index
since each eigenvalue occurs only in one block).
Thus we get the equation $\vec{\tilde m}_i = 1$.
If $\vec{\tilde m}_i$ is positive, this equation has a solution
since we can adjust $\vec\beta_i$ accordingly.
If it is zero, then the execution on the space spanned by $\vec{y_i}$ is bounded,
which we can rule out by \autoref{lem:loop-disassembly}.

It remains to rule out that $\vec{\tilde m}_i$ is negative.
Let $\mathcal{U}$ be the generalized eigenspace to the eigenvector $1$
and use \autoref{lem:eigenvalue-1} below to conclude that
$\vec{o} := N^{k-1}\vec{m} + \vec{u} \in \mathcal{Y}$
for some $\vec{u} \in \mathcal{U}^\bot$.
We have that
$M\vec{o} = M(N^{k-1}\vec{m} + \vec{u}) = M\vec{u} \in \mathcal{U}^\bot$,
so $\vec{o}$ is a candidate to pick for the vector $\vec{w_i}$.
Therefore without loss of generality we did so in part 1 of this proof
and since $\vec{y_i}$ is in the convex cone
spanned by the basis $\vec{w_1}, \ldots, \vec{w_k}$
we get $\vec{\tilde m}_i > 0$.
\qed
\end{proof}

\begin{lemma}[Deterministic Loops with Eigenvalue 1]
\label{lem:eigenvalue-1}
Let $M = I + N$ and let $N$ be nilpotent with nilpotence index $k$
($k := \min \{ i \mid N^i = 0 \}$).
If $GN^{k-1} \vec m \not\leq \vec 0$, then $L$ is terminating.
\end{lemma}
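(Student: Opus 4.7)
The plan is to explicitly compute the iterates of the loop and extract the dominant asymptotic term of $G\vec{x}_t$. Since every infinite execution must satisfy $G\vec{x}_t \leq \vec g$ for all $t$, it suffices to show that when $GN^{k-1}\vec m \not\leq \vec 0$, some component of $G\vec{x}_t$ grows without bound.

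First, I would use the binomial theorem together with $N^k = 0$ to write
\[
M^t = (I + N)^t = \sum_{i=0}^{k-1} \binom{t}{i} N^i,
\]
and then use the hockey-stick identity $\sum_{j=0}^{t-1} \binom{j}{i} = \binom{t}{i+1}$ to obtain
\[
\sum_{j=0}^{t-1} M^j = \sum_{i=0}^{k-1} \binom{t}{i+1} N^i.
\]
Plugging these into the closed form $\vec{x}_t = M^t \vec{x}_0 + \bigl(\sum_{j=0}^{t-1} M^j\bigr)\vec m$ gives
\[
G\vec{x}_t = \sum_{i=0}^{k-1} \binom{t}{i} G N^i \vec{x}_0 + \sum_{i=0}^{k-1} \binom{t}{i+1} G N^i \vec m.
\]

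Next, I would exploit the hypothesis $GN^{k-1}\vec m \not\leq \vec 0$: pick an index $j$ with $(GN^{k-1}\vec m)_j > 0$ and inspect the $j$-th component of $G\vec{x}_t$. The contribution coming from $i = k-1$ in the second sum is $\binom{t}{k}(GN^{k-1}\vec m)_j$, which grows as $\Theta(t^k)$. Every other summand carries a binomial coefficient $\binom{t}{i}$ or $\binom{t}{i+1}$ with $i \leq k-2$, so each is bounded above in absolute value by a polynomial of degree at most $k-1$ in $t$. Hence the $k$-th degree term strictly dominates and $(G\vec{x}_t)_j \to +\infty$ as $t \to \infty$.

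Finally, I would conclude: since $\vec g$ is a fixed real vector, eventually $(G\vec{x}_t)_j > g_j$, so the guard $G\vec{x}_t \leq \vec g$ is violated and no infinite execution exists, i.e., $L$ is terminating. There is no real obstacle here; the only care needed is in the bookkeeping between $M^t$ and the partial sum $\sum_{j<t} M^j$, and in verifying that every lower-order term is indeed $O(t^{k-1})$ so that the strictly positive leading coefficient cannot be cancelled.
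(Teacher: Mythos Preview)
Your argument is correct and takes a genuinely different route from the paper. The paper exhibits an explicit $k$-nested ranking function: from the row $\vec h^T$ of $G$ witnessing $\vec h^T N^{k-1}\vec m > 0$ it defines $f_j(\vec x) := -\vec h^T N^{k-j}\vec x + c_j$ for $1 \leq j \leq k$, verifies the nested decrease conditions $f_1(\vec{x'}) < f_1(\vec x) - \delta$ and $f_j(\vec{x'}) < f_j(\vec x) + f_{j-1}(\vec x)$ by direct calculation, and then invokes a prior soundness result for nested ranking functions. Your approach instead computes the closed-form iterate via the binomial and hockey-stick identities and isolates the $\Theta(t^k)$ leading term of $(G\vec{x}_t)_j$, showing directly that the guard row is eventually violated.

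Both arguments rest on the same underlying fact---that $N^{k-1}\vec m$ controls the highest-order growth---but package it differently. Your asymptotic computation is more elementary and self-contained, needing no external ranking-function framework; the paper's construction, on the other hand, yields a concrete termination certificate of a standard form, which is valuable when one wants a checkable proof object rather than merely the conclusion. One small point worth making explicit in your write-up: the argument is uniform in the starting state $\vec{x_0}$, since the leading coefficient $\binom{t}{k}(GN^{k-1}\vec m)_j$ does not depend on $\vec{x_0}$ and only the $O(t^{k-1})$ constants do.
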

\begin{proof}
We show termination
by providing an $k$-nested ranking function~\cite[Def.\ 4.7]{LH15LMCS}.
By \cite[Lem.\ 3.3]{LH15LMCS} and \cite[Thm.\ 4.10]{LH15LMCS},
this implies that $L$ is terminating.

According to the premise, $G N^{k-1} \vec m \not\leq 0$,
hence there is at least one positive entry in the vector $G N^{k-1} \vec m$.
Let $\vec h$ be a row vector of $G$ such that
$\vec h^T N^{k-1} \vec m =: \delta > 0$,
and let $h_0 \in \mathbb{R}$ be the corresponding entry in $\vec g$.
Let $\vec x$ be any state and
let $\vec{x'}$ be a next state after the loop transition,
i.e., $\vec{x'} = M \vec x + \vec m$.
Define the affine-linear functions
$f_j(\vec x) := -\vec h^T N^{k-j} \vec x + c_j$ for
$1 \leq j \leq k$ with constants $c_j \in \mathbb{R}$ to be determined later.
Since every state $\vec x$ satisfies the guard we have $\vec h^T\vec x \leq h_0$,
hence
$f_k(\vec x) = -\vec h^T \vec x + c_k \geq - h_0 + c_k > 0$ for $c_k := h_0 + 1$.
\begin{align*}
   f_1(\vec{x'})
 = f_1(\vec x + N\vec x + \vec m)
&= -\vec h^T N^{k-1} (\vec x + N \vec x + \vec m) + c_1 \\
&= f_1(\vec x) - \vec h^T N^k \vec x - \vec h^T N^{k-1} \vec m \\
&< f_1(\vec x) - 0 - \delta
\end{align*}
For $1 < j \leq k$,
\begin{align*}
   f_j(\vec{x'})
 = f_j (\vec x + N \vec x + \vec m)
&= -\vec h^T N^{k-j} (\vec x + N \vec x + \vec m) + c_j \\
&= f_j(\vec x) + f_{j-1}(\vec x) - \vec h^T N^{k-j} \vec m - c_{j-1} \\
&< f_j(\vec x) + f_{j-1}(\vec x)
\end{align*}
for $c_{j-1} := -\vec h^T N^{k-j} \vec m - 1$.
\qed
\end{proof}

\begin{example}[$U$ is not in Jordan Form]
\label{ex:U-is-not-Jordan}
The matrix $U$ defined in \eqref{eq:def-U}
and used in the completeness proof is generally \emph{not}
the Jordan normal form of the loop's transition matrix $M$.
Consider the following linear loop program.
\begin{center}
\begin{minipage}{45mm}
\begin{lstlisting}
while ($a - b \geq 0 \land b \geq 0$):
	$a$ := $3a$;
	$b$ := $b + 1$;
\end{lstlisting}
\end{minipage}
\end{center}
This program is nonterminating because $a$ grows exponentially and
hence faster than $b$.
It has the geometric nontermination argument
\begin{align*}
\vec{x_0} &= \abovebelow{9}{1}, &
\vec{x_1} &= \abovebelow{9}{1}, &
\vec{y_1} &= \abovebelow{12}{0}, &
\vec{y_2} &= \abovebelow{6}{1}, &
\lambda_1 &= 3, &
\lambda_2 &= 1, &
\mu_1 &= 1.
\end{align*}
The matrix corresponding to the linear loop update is
\[
M =
\left(
\begin{matrix}
3 & 0 \\
0 & 1
\end{matrix}
\right)
\]
which is diagonal (hence diagonalizable).
Therefore $M$ is already in Jordan normal form.
The matrix $U$ defined according to \eqref{eq:def-U} is
\[
U =
\left(
\begin{matrix}
3 & 1 \\
0 & 1
\end{matrix}
\right).
\]
The nilpotent component $\mu_1 = 1$ is important and
there is no GTNA for this loop program where $\mu_1 = 0$ since
the eigenspace to the eigenvector $1$ is spanned by $(0\; 1)^T$
which is in $\overline{\mathcal{Y}}$, but not in $\mathcal{Y}$.
\end{example}


\section{Experiments}
\label{sec:experiments}

We implemented our method in a tool that is specialized for the
analysis of lasso programs
and called \textsc{Ultimate LassoRanker}.
\textsc{LassoRanker} is used by \textsc{Ultimate Büchi Automizer}
which analyzes termination of (general) C programs via the following approach~\cite{HHP14}.
\textsc{Büchi Automizer} iteratively picks lasso shaped paths in the control
flow graph converts them to lasso programs and lets \textsc{LassoRanker} analyze them.
In case \textsc{LassoRanker} was able to prove nontermination a real counterexample to termination was found,
in case \textsc{LassoRanker} was able to provide a termination argument (e.g., a linear ranking function),
Büchi Automizer continues the analysis, but only on lasso shaped paths
for which the termination arguments obtained in former iterations are not applicable.

We applied \textsc{Ultimate Büchi Automizer} to the 631 benchmarks
from the category Termination of the
5th software verification competition SV-COMP 2016~\cite{svcomp16}
in two different settings:
one setting where we use our geometric nontermination arguments (GNTA) and
one setting where we only synthesize fixed points (i.e., infinite executions where one state is always repeated).

In both settings the constraints were stated over the integers and
we used the SMT solver Z3~\cite{JovanovicM12} with a timeout of 12s to solve our constraints.
The overall timeout for the termination analysis was 60s.
Using the fixed point setting the tool was able to solve 441 benchmarks
and the overall time for solving the (linear) constraints was 56 seconds.
Using the GNTA setting the tool was able to solve 487 benchmarks
and the overall time for solving the (nonlinear) constraints 2349 seconds.
The GNTA setting was able to solve 47 benchmarks
that could not be solved using the fixed point setting because there was
a nonterminating execution which did not had a fixed point.
The fixed point setting was able to solve 2 benchmarks that could not be solved using the GNTA setting because in the latter setting solving the linear constraints took too long.


\section{Related Work}
\label{sec:related-work}

One line of related work is focused on decidability questions for deterministic
lasso programs.
Tiwari~\cite{Tiwari04} considered linear loop programs over the reals where only strict
inequalities are used in the guard and proved that termination is decidable.
Braverman~\cite{Braverman06} generalized this result to loop programs that use
strict and non-strict inequalities in the guard.
Furthermore, he proved that termination is also decidable for homogeneous
deterministic loop programs over the integers.
Rebiha et al.~\cite{RMM14} generalized the result to integer loops where
the update matrix has only real eigenvalues.
Ouaknine et al.~\cite{OPW15} generalized the result to integer lassos where
the update matrix of the loop is diagonalizable.

Another line of related work is also applicable to nondeterministic programs and uses a
constraint-based synthesis of recurrence sets.
The recurrence sets are defined by templates~\cite{VelroyenR08,GHMRX08}
or the constraint is given in a second order theory for bit vectors~\cite{DavidKL15}.
These approaches can be used to find nonterminating lassos that do not have
a geometric nontermination argument; however, this comes at the price that
for nondeterministic programs an $\exists\forall\exists$-constraint
has to be solved.

Furthermore, there is a long line of
research~\cite{foveoos/BrockschmidtSOG11,cav/AtigBEL12,CCFNO14,fmcad/CookFNO14,LarrazNORR14,DavidKL15,LeQC15,seahorn16}
that addresses programs that
are more general than lasso programs.


\section{Conclusion}
\label{sec:conclusion}

We presented a new approach to nontermination analysis for linear lasso programs.
This approach is based on geometric nontermination arguments,
which are an explicit representation of an infinite execution.
These nontermination arguments can be found by solving a set of
nonlinear constraints.
In \autoref{sec:completeness} we showed that the class of nonterminating
linear lasso programs that have a geometric nontermination argument
is quite large:
it contains at least
every deterministic linear loop program whose eigenvalues are nonnegative.
We expect that this statement can be extended
to encompass also negative and complex eigenvalues.

The synthesis of nontermination arguments is useful not only
to discover infinite loops in program code,
but also to accelerate termination analysis
(a nonterminating lasso does not need to be checked for a termination argument)
or overflow analysis.
Furthermore,
since geometric nontermination arguments readily provide an infinite execution,
a discovered software fault is transparent to the user.


\bibliographystyle{abbrv}
\bibliography{references}


\appendix
\clearpage
{\Huge\bf Appendix}

\section{List of Notation}
\label{app:notation}

\begin{tabular}{ll}
$\mathbb{R}$
	& the set of real numbers \\
$L$
	& a linear lasso program \\
$\vec0$
	& a vector of zeros (of the appropriate dimension) \\
$\vec1$
	& a vector of ones (of the appropriate dimension) \\
$\vec{e_i}$
	& the $i$-th unit vector \\
$n$ & number of variables in the loop program \\
$k,i,j,\ell$
	& natural numbers \\
$\alpha, \beta, \gamma$
	& various nonnegative/positive parameters \\
$G$
	& part of the program guard $G\vec x \leq \vec g$ \\
$M$
	& the linear map in a deterministic linear loop program \\
$\vec{x}$
	& a program state, i.e., a real-valued vector of dimension $n$ \\
$\vec{x^*}$
	& a fixed point of the loop transition \\
$N$
	& a nilpotent matrix \\
$D$
	& a diagonal matrix \\
$T, U$
	& upper triangular matrixes \\
$\lambda_i$
	& an eigenvalue \\
$\vec{y}$
	& a ray of the guard polyhedron, $G\vec{y} \leq 0$
\end{tabular}

\end{document}